\newcommand{\ket}[1]{|#1\rangle}
\newcommand{\bra}[1]{\langle#1|}
\newcommand{\proj}[1]{\ket{#1}\bra{#1}}
\newcommand{\braket}[2]{\langle #1 | #2 \rangle}
\newcommand{\wek}[1]{{\boldsymbol{#1}}}
\newcommand{\beq}{\begin{equation}}
\newcommand{\eeq}{\end{equation}}
\newcommand{\Tr}{{\rm Tr}}
\newcommand{\cB}{\mathcal{B}}
 \newcommand{\ketbra}[2]{|#1\rangle\!\langle #2|}
\newcommand{\nc}{\newcommand}
\nc{\ox}{\otimes}
\newcommand{\trace}{{\rm Tr}}
\newcommand{\set}[1]{{\left\{#1\right\}}}    
\newcommand{\abs}[1]{\left\lvert #1 \right\rvert}
\newcommand{\complex}{{\mathbb C}}
\newcommand{\reals}{{\mathbb R}}
\newcommand{\openone}{I}
\begin{document}


\catchline{}{}{}{}{}

\title{CHARACTERIZING QUANTUMNESS VIA ENTANGLEMENT CREATION}

\author{Sevag Gharibian}
\address{Institute for Quantum Computing and School of Computer Science,\\
 University of Waterloo, Waterloo ON N2L 3G1, Canada\\
sggharib@cs.uwaterloo.ca}
\author{Marco Piani}
\address{Institute for Quantum Computing and Department of Physics and Astronomy,\\
University of Waterloo, Waterloo ON N2L 3G1, Canada\\
mpiani@iqc.ca}
\author{Gerardo Adesso}
\address{School of Mathematical Sciences,\\
University of Nottingham, University Park, Nottingham NG7 2RD, U.~K.\\
gerardo.adesso@nottingham.ac.uk}
\author{John Calsamiglia}
\address{F\'{i}sica Te\`{o}rica: Informaci\'{o} i Fen\`{o}mens Qu\`{a}nitcs,\\
Universitat Aut\`{o}noma de Barcelona, 08193 Bellaterra, Spain\\
john.calsamiglia@uab.cat}
\author{Pawe{\l} Horodecki}
\address{Faculty of Applied Physics and Mathematics,\\
Technical University of Gda\'nsk, 80-952 Gda\'nsk, Poland\\
pawel@mif.pg.gda.pl}

%
%
%

\maketitle


\begin{abstract}
In [M. Piani et al., arXiv:1103.4032 (2011)]  an activation protocol was introduced which maps the general non-classical (multipartite) correlations between given systems into bipartite entanglement between the systems and local ancillae by means of a potentially highly entangling interaction. Here, we study how this activation protocol can be used to entangle the starting systems themselves via entanglement swapping through a measurement on the ancillae. Furthermore, we bound the relative entropy of quantumness (a naturally arising measure of non-classicality in the scheme of Piani et al. above) for a special class of separable states, the so-called classical-quantum states. In particular, we fully characterize the classical-quantum two-qubit states that are maximally non-classical.
\end{abstract}

\keywords{Non-classical correlations; quantumness; entanglement generation}

\section{Introduction}
\label{sec:intro}

The non-classicality of correlations present in bi- and multi-partite quantum states is not due solely to the presence of entanglement.\cite{HHHH09} Namely, there exist quantum states which are unentangled, but nevertheless exhibit traits that have \emph{no} counterpart in the classical world. Such traits include (e.g.) no-local broadcasting\cite{pianietal2008nolocalbrodcast} and the locking of correlations.\cite{DHLST04,DG09} Furthermore, a notion of non-classicality weaker than entanglement --- or rather, more general than entanglement ---  is conjectured to play a role in the model of mixed-state quantum computation known as DQC1,\cite{PhysRevLett.81.5672} providing a believed exponential speed-up with respect to classical computation even in the absence or limited presence of entanglement.\cite{PhysRevLett.100.050502} This latter conjecture in particular
has led to significant efforts to characterize and quantify the non-classicality --- or \emph{quantumness} --- of correlations.\cite{PhysRevLett.88.017901,HV,PhysRevA.71.062307,PhysRevA.72.032317,PhysRevLett.104.080501,groismanquantumness,PhysRevA.77.052101,Luo2008,Bravyi2003,pianietal2008nolocalbrodcast,pianietal2009broadcastcopies,ferraro,ADA,PhysRevA.82.052342,streltsov2011}

In this latter context, an \emph{activation} protocol was considered in Ref.~\refcite{activation} which maps general non-classical (multipartite) correlations between input systems into bipartite entanglement between the systems and ancillae. This is accomplished by letting the ancillae and input systems interact via highly entangling gates, namely CNOTs, with the systems acting as controls (a formal description of the protocol is given in Section~\ref{sec:definitions}). One advantange of this mapping is that it allows us to apply the tools and concepts of entanglement theory to the study of the quantumness of correlations. Further, the activation protocol, when considered in an adversarial context where the control bases are chosen so as to create the minimal amount of system-ancilla entanglement, provides an operational interpretation of the relative entropy of quantumness as being the minimum distillable entanglement\cite{reviewplenio} \emph{necessarily} (i.e. in the worst case scenario) created between the input systems and the ancillae.

In this article, we present two main contributions towards a better understanding of the quantumness of correlations, both inspired by the activation protocol of Ref.~\refcite{activation}.

The first contribution is a non-trivial upper bound on the relative entropy of quantumness\cite{Bravyi2003,groismanquantumness,PhysRevA.77.052101,Luo2008,PhysRevLett.104.080501} for a special class of separable states, the so-called classical-quantum states. Further, we are able to fully characterize the classical-quantum two-qubit states which are maximally non-classical (with respect to the relative entropy of quantumness).

The second contribution is the study of an approach for entangling the input systems via the use of the ancillae that are introduced in the activation protocol.
The approach is as follows: We first let each system interact with an ancilla, and then we try to ``swap''\cite{swapping} the entanglement created between the input systems and ancillae back into entanglement among the input systems. Further, we assume a worst-case scenario in performing this mapping --- namely, we are interested in whether there exists a choice of control bases for the activation protocol for which no entanglement can be created between the input systems with this approach, even if we allow for post-selection in the manipulation of the ancillae after the entangling interaction.
With respect to this mapping, we find conditions under which entanglement can or cannot be swapped into the input system. For example, we find that there are non-classical states such that an adversarial choice of the control bases makes it impossible to swap  entanglement into the input systems, even if entanglement is \emph{necessarily} created between systems and ancillae in the first interaction step.

This paper is organized as follows. In Section~\ref{sec:definitions}, we state definitions and recall the activation protocol introduced in Ref.~\refcite{activation}. In Section~\ref{sec:bounds}, we provide bounds on non-classicality, as measured by the relative entropy of quantumness. In Section~\ref{sec:swapping}, we present several results and observations regarding entangling input systems via the activation protocol and entanglement swapping. In Section~\ref{sec:conclusions}, we conclude.

\section{Background and Definitions}
\label{sec:definitions}

We first define a \emph{classical} quantum state as follows.
\begin{definition}[Strictly Classically Correlated Quantum State]
\label{def:classical}
Given a set of n $d$-dimensional qudit systems, let ${\cB}_i$ denote an orthonormal basis in $\complex^d$ for the $i$th system consisting of vectors $\ket{\cB_i(k)}$ for $0\leq k\leq d-1$, and let ${\cB}$ denote an orthonormal basis $\{\ket{\cB(\wek{k})}=\ket{\cB_1(k_1)}\ket{\cB_2(k_2)}\cdots \ket{\cB_n(k_n)}\}$ for the entire space $(\complex^{d})^{\otimes n}$ formed by taking tensor products of all elements in bases $\set{{\cB}_i}_{i=1}^n$.
Then, an $n$-qudit state $\rho$ is \emph{strictly classically correlated}---or  simply \emph{classical}---if there exists such a basis $\cB$ with respect to which $\rho$ is diagonal. Such states correspond to the embedding of a multipartite classical probability distribution into the quantum formalism.
\end{definition}

We now outline the activation scheme of Ref.~\refcite{activation}, and follow with a formal description. Roughly, the scheme consists of letting the input systems under scrutiny interact with local ancillae, and then studying the entanglement generated between systems and ancillae. While we consider a potentially very entangling operation---specifically, given by CNOTs, with the input systems acting as control registers---we assume a worst-case-scenario perspective where we are interested in a choice of control bases for the CNOTs which are worst for entanglement generation. In other words, one can view the protocol as a game in which one tries to adversarially \emph{minimize} the entanglement produced under the action of the CNOTs by cleverly choosing the control basis of the CNOTs. Intuitively, this leads to entanglement generation for non-classical states because a CNOT gate entangles generic states except those that belong to a particular basis set --- in particular, it is impossible to write non-classical states as a convex combination of states from this latter set.

To be more precise, 
let $\wek{A}$ denote a joint register $A_1,\ldots,A_n$ of $n$ qudit systems (i.e. density operators representing states in $\wek{A}$ act on $(\complex^{d})^{\otimes n}$), henceforth called the ``system'', and let $\wek{A'}$ denote a joint register $A'_1,\ldots,A'_n$ of $n$ ancilla qudit registers, henceforth called the ``ancilla'' (see Figure~\ref{fig:activation}).
\begin{figure}[pb]
\centerline{\psfig{file=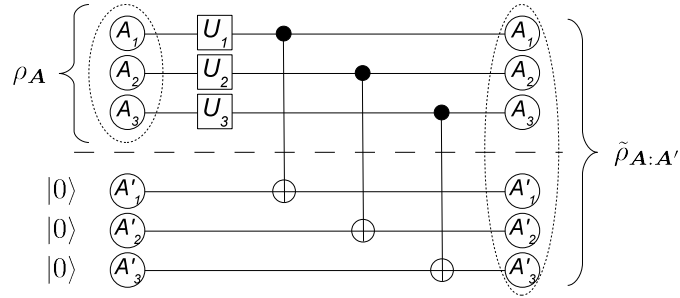,width=9cm}}
\vspace*{8pt}
\caption{Scheme of the activation protocol for $n=3$.}
\label{fig:activation}
\end{figure}
The initial state of the total $2n$ qudits is a tensor product between system and ancilla, namely $\rho_{\wek{A}:\wek{A'}} = \rho_{\wek{A}} \otimes \ket{0}\bra{0}^{\otimes n}_{\wek{A'}}$.
For a given input $\rho_\wek{A}$, we first consider for each $i$ an adversarial application of a local unitary $U_i$ to each $A_i$ (i.e. this chooses the control basis for system $i$), and follow by applying one CNOT gate on each subsystem $A_i$ (control qudit) and the corresponding ancillary party $A'_i$ (target qudit). The action of the CNOT on the computational basis states $\ket{j}\ket{j^\prime}$ of $\complex^d \otimes \complex^d$ is defined as
$\ket{j}\ket{j^\prime}\mapsto\ket{j}\ket{j^\prime\oplus j}$,
with $\oplus$ denoting addition modulo $d$. Formally, the final state of system plus ancilla at the end of the protocol is
\begin{equation}\label{eq:finalstate}
\tilde{\rho}_{\wek{A}:\wek{A'}} = V (\rho_{\wek{A}} \otimes \ket{0}\bra{0}^{\otimes n}_{\wek{A'}}) V^\dagger\,,
\end{equation}
with $V = {{CNOT}}_{\wek{A}:\wek{A'}} \cdot (U_{\wek{A}} \otimes \openone_{\wek{A'}})$ and $U_{\wek{A}} = \otimes_{i=1}^n U_i$. The question now is: What is the minimum amount of entanglement generated in the $\wek{A}:\wek{A'}$ split over all choices of local control bases $U_i$?

We first note that the state $\tilde{\rho}_{\wek{A}:\wek{A'}}$ in the mapping \eqref{eq:finalstate} can be rewritten as
\beq
\label{eq:maxcorr}
\tilde{\rho}_{\wek{A}:\wek{A'}}=\sum_{\wek{i}\wek{j}}\rho_{\wek{i}\wek{j}}^\wek{\cB}\ket{\wek{i}}\bra{\wek{j}}_{\wek{A}}\otimes \ket{\wek{i}}\bra{\wek{j}}_{\wek{A'}},
\eeq
where
\beq
\label{eq:matrixelement}
\rho_{\wek{i}\wek{j}}^\wek{\cB}=\bra{\wek{\cB}(\wek{i})}\rho_\wek{A} \ket{\wek{\cB}(\wek{j})},
\eeq
and  $\ket{\wek{\cB}(\wek{i})}_{\wek{A}}=U_\wek{A}^\dagger \ket{\wek{i}}$. In other words, $\tilde{\rho}_{\wek{A}:\wek{A'}}$ is of the maximally correlated\cite{R01} form in the $\wek{A}:\wek{A'}$ cut.

In Ref.~\refcite{activation}, it was proven that an input state $\rho_{\wek{A}}$ is classical if and only if there exists a choice of $U_\wek{A}$ such that  $\tilde{\rho}_{\wek{A}:\wek{A'}}$ is not entangled in the $\wek{A}:\wek{A'}$ bipartite cut. In particular one can choose to quantify the bipartite entanglement generated across the $\wek{A}:\wek{A'}$ cut by the relative entropy of entanglement,\cite{PhysRevLett.78.2275,PhysRevA.57.1619}
\beq
E_R(\rho_{C:D})=\min_{\text{separable }\sigma_{C:D}}S(\rho_{C:D}\|\sigma_{C:D}),
\eeq
where we define the relative entropy as $S(\rho\|\sigma):=\trace(\rho \log_2 \rho - \rho \log_2 \sigma)$, and by \emph{separable} $\sigma_{C:D}$ we mean states admitting a separable decomposition $\sigma_{CD}=\sum_i p_i \sigma^i_C \otimes \sigma^i_D$.
We remark that since the output of the mapping in \eqref{eq:finalstate} is maximally correlated, we have that $E_R(\tilde{\rho}_{\wek{A}:\wek{A'}})=E_D(\tilde{\rho}_{\wek{A}:\wek{A'}})$,\cite{Hiroshima2004} where $E_D$ denotes the distillable entanglement.\cite{reviewplenio} We thus obtain the following measure of non-classicality (see Ref.~\refcite{activation} for a proof of the various equivalences):
 \begin{equation}
 \label{eq:equivalences}
\begin{aligned}
 Q({\rho}_{{\wek{A}}}):&=\min_{U_{{\wek{A}}}}E_{D}(\tilde{\rho}_{\wek{A}:\wek{A'}})\\
 &=\min_{U_{{\wek{A}}}}E_{R}(\tilde{\rho}_{\wek{A}:\wek{A'}})\\
 &=\min_{\textrm{classical}~\sigma_\wek{A}}S(\rho_\wek{A}\|\sigma_\wek{A})\\
 &=\min_{{\cB}}\Big(S(\rho_{\wek{A}}^{\wek{\cB}})-S(\rho_{\wek{A}})\Big),
\end{aligned}
 \end{equation}
which is also known as the \emph{relative entropy of quantumness}.\cite{Bravyi2003,groismanquantumness,PhysRevA.77.052101,Luo2008,PhysRevLett.104.080501}
Here, the minimization in the first two lines is over local unitaries, in the third line is over all classical states $\sigma_\wek{A}$ and in the last line is over all choices of local orthonormal bases. We denote by $S(\sigma)$ the von Neumann entropy $S(\sigma):=-\Tr(\sigma\log_2\sigma)$, and by $\rho_{\wek{A}}^{\wek{{\cB}}}$ the state resulting from local projective measurements in the local bases $\wek{\cB}$ on $\rho_{\wek{A}}$, i.e. $\rho_{\wek{A}}^{\wek{\cB}}=\sum_{\wek{i}}\proj{\wek{\cB}(\wek{i})}\rho_\wek{A} \proj{\wek{\cB}(\wek{i})}$.

\section{Upper Bounds for Classical-Quantum Separable States}\label{scn:upperBoundsSeparable}
\label{sec:bounds}

In Ref.~\refcite{activation}, it was shown that for a bipartite state $\rho_{AB}$ (where in the bipartite case we adopt the notational convention that $A_1=A$ and $A_2=B$) the quantity $Q(\rho_{AB})$ can achieve its maximum value only for entangled states. It was also found that for asymptotically increasing local dimension $d$, separable states can be almost as non-classical as pure entangled states. Here, we consider the complementary problem of the maximum value attainable for $Q(\rho_{AB})$ by \emph{separable} states of \emph{fixed} dimensions. What we are able to obtain is a simple upper bound on $Q(\rho_{AB})$ for classical-quantum (CQ) states that holds for arbitrary local dimensions. Here, CQ states are separable states which are block diagonal for some choice of basis for $A$ (see Lemma \ref{l:cqbound} below). We then completely characterize the set of maximally non-classical two-qubit CQ states with respect to the relative entropy of quantumness $Q(\rho_{AB})$, and show that such states achieve $Q(\rho_{AB})=1/2$.

We first derive the claimed upper bound, which holds even when the local dimensions of $A$ and $B$ differ. We remark that, although for simplicity the case of equal dimensions is considered (both here and in Ref.~\refcite{activation}), a generalization to different dimensions for the input systems (matched by corresponding dimensions on the ancilla side) is straightforward for most of the results of~Ref.~\refcite{activation}. In particular, the relations \eqref{eq:equivalences} still hold true.

\begin{lemma}\label{l:cqbound}
For any CQ state $\rho_{AB} = \sum_{i=1}^{d_A} p_i \ketbra{i}{i}\otimes\rho_i$, where $\set{\ket{i}}_{i=1}^{d_A}$ is an orthonormal basis and $d_A$ and $d_B$ denote the local dimensions of systems $A$ and $B$, respectively, one has
\begin{equation}
    Q_{E_\textup{D}} (\rho_{AB})\leq \left(1-\frac{1}{d_A}\right)\log_2 d_B.
\end{equation}

\end{lemma}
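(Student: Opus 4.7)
The plan is to use the variational characterization $Q(\rho_{AB}) = \min_{\cB}\bigl(S(\rho_{AB}^{\cB}) - S(\rho_{AB})\bigr)$ from~\eqref{eq:equivalences} and simply exhibit a single local product basis $\cB$ that already attains the claimed bound. Since $\rho_{AB}$ is block-diagonal with respect to the local basis $\{\ket{i}\}_{i=1}^{d_A}$ on $A$, a projective measurement of $A$ in that basis leaves $\rho_{AB}$ untouched. So the only design choice left is the local basis $\cB_B = \{\ket{f_j}\}_{j=1}^{d_B}$ on $B$.

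Computing the two entropies is then a routine step. Because the blocks $p_i \ketbra{i}{i}\otimes\rho_i$ sit on mutually orthogonal subspaces, one has $S(\rho_{AB}) = H(p) + \sum_i p_i S(\rho_i)$; for any $\cB_B$, measuring $B$ in $\cB_B$ dephases each $\rho_i$ to its diagonal $D_{\cB_B}(\rho_i)$ in that basis, giving $S(\rho_{AB}^{\cB}) = H(p) + \sum_i p_i S(D_{\cB_B}(\rho_i))$. Subtracting yields, for every choice of $\cB_B$,
\[
Q(\rho_{AB}) \leq \sum_i p_i \bigl[S(D_{\cB_B}(\rho_i)) - S(\rho_i)\bigr].
\]

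The only real design decision is how to pick $\cB_B$. I would let $k$ be an index with $p_k = \max_i p_i$ and take $\cB_B$ to be an eigenbasis of $\rho_k$, so that $D_{\cB_B}(\rho_k)=\rho_k$ and the $i=k$ term drops out; for every other $i$ the crude bounds $S(D_{\cB_B}(\rho_i)) \leq \log_2 d_B$ and $S(\rho_i) \geq 0$ give $Q(\rho_{AB}) \leq (1 - p_k)\log_2 d_B$. Since the $p_i$'s form a probability distribution over at most $d_A$ indices, a pigeonhole argument forces $p_k \geq 1/d_A$, which establishes the lemma. There is no genuine obstacle to overcome; the only mild insight is that aligning $\cB_B$ with the eigenbasis of the \emph{heaviest} block $\rho_k$ (rather than, say, a symmetric or mutually unbiased choice) is precisely what converts the bound $\max_i p_i \geq 1/d_A$ into the prefactor $1 - 1/d_A$ appearing in the statement.
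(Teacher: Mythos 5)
Your proposal is correct and follows essentially the same route as the paper: measure $A$ in the basis $\{\ket{i}\}$ (leaving the CQ state invariant), reduce to $\min_{\cB_B}\sum_i p_i\bigl[S(\rho_i^{\cB_B})-S(\rho_i)\bigr]$, and then align $\cB_B$ with an eigenbasis of the block of largest weight $p_m\geq 1/d_A$, bounding the remaining terms by $\log_2 d_B$. No gaps; this matches the paper's argument step for step.
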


\begin{proof}
We have
\begin{eqnarray}
    Q (\rho_{AB}) &=& \min_{\wek{\mathcal{B}}} S(\rho_{AB}^{\wek{\mathcal{B}}})-S(\rho_{AB})\nonumber\\
        &=& \min_{\mathcal{B}_B} S\left(\sum_{i=1}^{d_A} p_i \ketbra{i}{i}\otimes\left(\sum_{{j=1}}^{d_B}\proj{{\cB}_B({j})}\rho_i \proj{{\cB}_B({j})} \right )\right)\nonumber\\
        &&\quad-S (\rho_{AB})\nonumber\\
        &=& \min_{\mathcal{B}_B} \left(H(p) + \sum_{i=1}^{d_A} p_i S \left(\rho^{\cB_B}_{i} \right )\right) - \left(H(p) + \sum_{i=1}^{d_A} p_i S (\rho_i)\right)\nonumber\\
        &=& \min_{\mathcal{B}_B}\sum_{i=1}^{d_A} p_i \left[S \left(\rho^{\cB_B}_{i} \right ) -S (\rho_i)\right]\label{eq:CQbound1},
\end{eqnarray}
where $H(p)$ denotes the Shannon entropy of the probability distribution $\set{p_i}_i$, and the second equality follows from choosing $\mathcal{B}_A$ to coincide with the basis $\set{\ket{i}}_i$. Let $p_m:=\max_i p_i$. Our strategy is to let $\mathcal{B}_B$ project onto an eigenbasis of $\rho_m$, yielding:
\begin{eqnarray}
    Q (\rho_{AB})
        &\leq& \sum_{i\neq m} p_i \left[S \left(\rho^{\cB_B}_{i}\right ) -S (\rho_i)\right]\label{eq:simplestrat1}\\
        &\leq& \sum_{i\neq m} p_i S \left(\rho^{\cB_B}_{i} \right )\label{eq:simplestrat2} \\
        &\leq& \left(1-\frac{1}{d_A}\right)\log_2 d_B,\nonumber
\end{eqnarray}
where the second last inequality follows since $S(\rho_i)\geq 0$, and the last inequality follows since $p_m \geq 1/d_A$ and $S(\sigma_B)\leq \log_2 d_B$ for any density operator $\sigma_B$.
\end{proof}

For a two-qubit CQ state $\rho_{AB}$, Lemma~\ref{l:cqbound} implies $Q(\rho_{AB}) \leq 1/2$. We now show that this bound is tight by characterizing the set of CQ states attaining $Q (\rho_{AB}) = 1/2$.

\begin{lemma}\label{l:maxCQ}
    Consider CQ state $\rho_{AB}$ acting on $\complex^2\otimes\complex^2$ such that $\rho_{AB} = \sum_{i=1}^{2} p_i \ketbra{i}{i}\otimes\rho_i$, for some orthonormal basis $\set{\ket{i}}_{i=1}^{2}$. Then $Q (\rho_{AB}) = 1/2$ if and only if $p_1=p_2=1/2$ and $\rho_1=\proj{\psi_1}$ and $\rho_2=\proj{\psi_2}$ for some $\ket{\psi_1},\ket{\psi_2}\in\complex^2$ such that $\abs{\braket{\psi_1}{\psi_2}}^2=1/2$.
\end{lemma}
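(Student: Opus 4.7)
\medskip
\noindent\textbf{Proof plan.} Since Lemma~\ref{l:cqbound} with $d_A=d_B=2$ already gives $Q(\rho_{AB})\leq 1/2$, the plan splits into two pieces: (i) saturate the chain of inequalities used to prove Lemma~\ref{l:cqbound} to read off the necessary conditions, and (ii) exhibit a matching lower bound $Q(\rho_{AB})\geq 1/2$ for every state of the claimed form.

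For necessity, I would trace through the chain \eqref{eq:CQbound1}--\eqref{eq:simplestrat2} and demand that every ``$\leq$'' become an equality. Saturating the bound $p_m\geq 1/d_A$ forces $p_1=p_2=1/2$; saturating $S(\rho_{m'})\geq 0$ in the passage from \eqref{eq:simplestrat1} to \eqref{eq:simplestrat2} forces $\rho_{m'}=\proj{\psi_{m'}}$ to be pure; and saturating $S(\rho_{m'}^{\cB_B})\leq \log_2 d_B$ with $\cB_B$ an eigenbasis of $\rho_m$ forces $\rho_{m'}^{\cB_B}=I/2$. Because $p_1=p_2$, the index $m=\arg\max_i p_i$ is not unique, so running the same argument with $m$ and $m'$ swapped also yields $\rho_m=\proj{\psi_m}$ pure. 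The eigenbasis of $\rho_m$ is then $\{\ket{\psi_m},\ket{\psi_m^\perp}\}$, and the condition $\rho_{m'}^{\cB_B}=I/2$ reads exactly $|\braket{\psi_1}{\psi_2}|^2=1/2$.

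For sufficiency, I would invoke the entropic form of $Q$ given by the last line of \eqref{eq:equivalences}. By local unitary invariance one may take $\ket{\psi_1}=\ket{0}$ and $\ket{\psi_2}=\ket{+}$, so that $S(\rho_{AB})=1$; it then suffices to prove $S(\rho_{AB}^\cB)\geq 3/2$ for every product basis $\cB_A\otimes\cB_B=\{\ket{e_k}\}\otimes\{\ket{f_l}\}$. Setting $s_k=|\braket{e_k}{0}|^2$, $u_l=|\braket{f_l}{0}|^2$, and $v_l=|\braket{f_l}{+}|^2$, a direct calculation gives $p_{kl}=\frac{1}{2}[s_k u_l+(1-s_k)v_l]$, with uniform $A$-marginal $p_k=1/2$ (hence $H(k)=1$) and conditional distribution $p_{l|k}=s_k u+(1-s_k)v$. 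Concavity of the Shannon entropy, together with the ONB identity $s_1+s_2=1$, then gives $H(l|k)\geq \frac{1}{2}(H(u)+H(v))$. The Maassen--Uffink entropic uncertainty relation applied to $\ket{f_1}$ in the mutually unbiased bases $\{\ket{0},\ket{1}\}$ and $\{\ket{+},\ket{-}\}$ (whose maximal overlap is $c=1/2$) yields $H(u)+H(v)\geq \log_2(1/c)=1$, so that $S(\rho_{AB}^\cB)=H(k)+H(l|k)\geq 3/2$ and hence $Q(\rho_{AB})\geq 1/2$.

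The main obstacle is the sufficiency direction: the minimization over $\cB_A\otimes\cB_B$ is a multi-parameter optimization that does not yield to direct calculus. The critical insight is that the hypothesis $|\braket{\psi_1}{\psi_2}|^2=1/2$ is precisely what makes the two candidate single-qubit bases $\{\ket{\psi_i},\ket{\psi_i^\perp}\}$ mutually unbiased, so that Maassen--Uffink supplies the correct lower bound $1$ on $H(u)+H(v)$. A smaller subtlety in the necessity part is deducing that \emph{both} $\rho_i$ are pure, which requires exploiting the symmetry $p_1=p_2$ so that either index may play the role of $m$.
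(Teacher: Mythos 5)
Your proof is correct, and its necessity half is essentially the paper's: saturating the chain \eqref{eq:CQbound1}--\eqref{eq:simplestrat2} with $\cB_B$ an eigenbasis of one block forces $p_1=p_2=1/2$, then purity of both $\rho_i$ (using, as you note, that $m$ is not unique when $p_1=p_2$ --- the paper phrases this as measuring in the eigenbasis of the other block), and finally $\abs{\braket{\psi_1}{\psi_2}}^2=1/2$, which is exactly the paper's ``choose $\ket{\cB_B(0)}=\ket{\psi_1}$'' step. For sufficiency both you and the paper rely on Maassen--Uffink together with the observation that $\abs{\braket{\psi_1}{\psi_2}}^2=1/2$ makes the relevant qubit bases mutually unbiased, but you deploy it differently: the paper fixes $\cB_A=\set{\ket{i}}$ via the equality in Eq.~\eqref{eq:CQbound1}, reduces to $\tfrac12\min_{\cB_B}\big[S([\proj{\psi_1}]^{\cB_B})+S([\proj{\psi_2}]^{\cB_B})\big]$, and ``switches the roles'' of state and measurement so that MU applies to the bases $\set{\ket{\psi_i},\ket{\psi_i^\perp}}$; you instead lower-bound $S(\rho_{AB}^{\wek{\cB}})$ for an \emph{arbitrary} product basis, via the chain rule (the uniform $A$-marginal gives $H(k)=1$), concavity of $H$ combined with $s_1+s_2=1$, and MU applied to $\ket{f_1}$ in the mutually unbiased bases $\set{\ket{0},\ket{1}}$ and $\set{\ket{+},\ket{-}}$. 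The two routes give the same bound $Q\geq 1/2$, but yours does not need to invoke optimality of measuring $A$ in the classical basis: the paper uses Eq.~\eqref{eq:CQbound1} as an equality in this lemma, whereas ``choosing $\cB_A$ to coincide with $\set{\ket{i}}$'' a priori only yields an upper bound, and the missing optimality argument is precisely the concavity-plus-marginal computation your proof carries out explicitly --- so your version is slightly more self-contained at the cost of a short calculation. One cosmetic point: state the MU constant consistently --- with the squared overlap $c=1/2$ the bound is $\log_2(1/c)=1$, equivalently $-2\log_2(1/\sqrt2)=1$ for the amplitude overlap; the value $1$ you use is correct either way.
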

\begin{proof}
That $\rho_{AB}$ with $p_1\neq 1/2$ implies $Q(\rho_{AB}) < 1/2$ follows immediately from Eq.~\eqref{eq:simplestrat2} and the fact that $0 \leq S(\sigma)\leq 1$ for any 1-qubit density operator $\sigma$. We thus henceforth assume $p_1=p_2=1/2$. That $\rho_1$ and $\rho_2$ must be pure now also follows analogously, for if, say, $\rho_1$ is mixed, then we simply choose $\mathcal{B}_B$ in Eq.~\eqref{eq:simplestrat1} to instead project onto an eigenbasis of $\rho_2$, and use the fact that $S(\rho_1)>0$ to achieve $Q(\rho_{AB}) < 1/2$. We thus henceforth assume $\rho_1=\proj{\psi_1}$ and $\rho_2=\proj{\psi_2}$ for some $\ket{\psi_1},\ket{\psi_2}\in\complex^2$. It remains to show that we must have $\abs{\braket{\psi_1}{\psi_2}}^2=1/2$.

Plugging $\rho_{AB}$ into Eq.~\eqref{eq:CQbound1} and noting that $S(\rho_1)=S(\rho_2)=0$, we have
\begin{align}
    Q (\rho_{AB}) &=\frac{1}{2}\min_{\mathcal{B}_B} \left[S([\ketbra{\psi_1}{\psi_1}]^{\cB_B}) + S([\ketbra{\psi_2}{\psi_2}]^{\cB_B})\right]\\
    \begin{split}
    &= \frac{1}{2}\min_{\mathcal{B}_B} \Big[H\left(\abs{\braket{{\cB}_B({0})}{\psi_1}}^2,\abs{\braket{{\cB}_B({1})}{\psi_1}}^2\right)\\
    &\qquad\;\, \quad+H\left(\abs{\braket{{\cB}_B({0})}{\psi_2}}^2,\abs{\braket{{\cB}_B({1})}{\psi_2}}^2\right)\Big]\\
    \end{split}\\
    \begin{split}
    &= \frac{1}{2}\min_{\ket{{\cB}_B({0})}} \Big[H\left(\abs{\braket{{\cB}_B({0})}{\psi_1}}^2,\abs{\braket{{\cB}_B({0})}{\psi_1^\perp}}^2\right)\\
    &\;\,\quad\qquad \quad+ H\left(\abs{\braket{{\cB}_B({0})}{\psi_2}}^2,\abs{\braket{{\cB}_B({0})}{\psi_2^\perp}}^2\right)\Big]\label{eq:CQmin}
    \end{split}
\end{align}
where $\braket{\psi_1}{\psi_1^\perp}=\braket{\psi_2}{\psi_2^\perp}=0$, and where the last equality follows since $\proj{{\cB}_B({j})}$ are rank-one projectors. Note that one can think of the last equality as effectively switching the roles of the measurement and the target state, so that the minimization can be thought of as being taken over all pure \emph{target} states $\ket{{\cB}_B({0})}$ with respect to \emph{measurements} in the bases $\mathcal{B}_1:=\set{\ket{\psi_1},\ket{\psi_1^\perp}}$ and $\mathcal{B}_2:=\set{\ket{\psi_2},\ket{\psi_2^\perp}}$. We can now plug Eq.~\eqref{eq:CQmin} into the well-known entropic uncertainty relation of Maassen and Uffink\cite{Maassen1988} to immediately obtain:
\begin{equation}
    Q(\rho_{AB}) \geq \max _{\ket{\phi_1}\in \mathcal{B}_1,\ket{\phi_2}\in \mathcal{B}_2}-\log_2\abs{\braket{\phi_1}{\phi_2}}.
\end{equation}
Note that this lower bound attains its maximum value of $1/2$ if $\mathcal{B}_1$ and $\mathcal{B}_2$ are mutually unbiased, i.e. when $\abs{\braket{\psi_1}{\psi_2}}^2=1/2$. On the other hand, suppose $\mathcal{B}_1$ and $\mathcal{B}_2$ are \emph{not} mutually unbiased, i.e. suppose without loss of generality that $\abs{\braket{\psi_1}{\psi_2}}^2 > 1/2$. Then choosing $\ket{{\cB}_B({0})}=\ket{\psi_1}$ in Eq.~\eqref{eq:CQmin} yields $Q_{E_\textup{D}} (\rho_{AB})<1/2$. The claim follows.
\end{proof}

Combining Lemmas~\ref{l:cqbound} and~\ref{l:maxCQ}, we obtain a characterization of the set of two-qubit CQ states which are deemed maximally non-classical by $Q$. Such states include, for example, the CQ state
\begin{eqnarray}
    \rho &=& \frac{1}{2}\ketbra{0}{0}\otimes\ketbra{0}{0} + \frac{1}{2}\ketbra{1}{1}\otimes\ketbra{+}{+}\label{eq:CQexample}\\
    &=&\frac{1}{2}\left(
         \begin{array}{cccc}
           1 & 0 & 0 & 0 \\
           0 & 0 & 0 & 0 \\
           0 & 0 & \frac{1}{2} & \frac{1}{2} \\
           0 & 0 & \frac{1}{2} & \frac{1}{2} \\
         \end{array}
       \right),
\end{eqnarray}
where $\ket{+}=(\ket{0}+\ket{1})/\sqrt{2}$.

\section{Swapping the Ancilla-System Entanglement onto the System}
\label{sec:swapping}

We now explore the possibility of generating entanglement in the \emph{original} system ${\wek A}$ by projecting the ancilla systems $\wek{A}'$ of the state $\tilde{\rho}_{\wek{A}:\wek{A}'}$ of \eqref{eq:finalstate} jointly onto an entangled pure state. In other words, we consider a stochastic entanglement swapping process\cite{swapping} that maps the $\wek{A}:\wek{A}'$ entanglement onto the systems $\wek{A}$. As we are only interested in knowing whether this is possible (rather than, say, in the probability of success), the filtering via a pure state is not restrictive and corresponds to the best possible strategy. Our results indicate that this feat is possible for some, but not all, separable non-classical states.

We begin by noting that thanks to the maximally-correlated form \eqref{eq:maxcorr} of $\tilde{\rho}_{\wek{A}:\wek{A}'}$, we have that the (unnormalized) final state of system $\wek{A}$ after projecting the ancilla system onto the joint state $\ket{\phi}=\sum_\wek{i}\Phi_{\wek{i}}\ket{\wek{i}}$ for unit vector $\ket{\phi}\in(\complex^d)^{\otimes n}$ is given by
\begin{equation}
    \tilde{\rho}_{\wek{A}}
    =
    \Tr_{\wek{A'}}(\tilde\rho_{\wek{A}:\wek{A'}}\proj{\phi}_\wek{A'})
    =
    \sum_{\wek{i}\wek{j}}   \left[\rho_{\wek{ij}}^{\wek{\cB}}\Phi_{\wek{i}}\Phi^*_{\wek{j}}\right]\ketbra{\wek{i}}{\wek{j}}\label{eqn:state},
\end{equation}
with $\rho_{\wek{ij}}^{\wek{\cB}}$ defined in \eqref{eq:matrixelement}.
Hence, the resulting (unnormalized) state $\tilde{\rho}_{\wek{A}}$ is simply the Hadamard product of the original state (represented in the $\wek{\cB}$ basis) and $\ketbra{\phi}{\phi}$ (represented in the computational basis). Since $\ket{\phi}$ is arbitrary, we can say that $\tilde{\rho}_{\wek{A}}$ is obtained by rescaling rows and columns (with the same---up to conjugation---rescaling factor for row and column $\wek{i}$) of the original state $\rho_\wek{A}$ represented in the basis $\wek{\cB}$.

As previously mentioned, our goal is to answer the question of whether entanglement can be generated in the input systems for \emph{any} choice of starting local bases for the CNOT gates. In other words, while we allow arbitrary rescaling of rows and columns, the starting local basis $\wek{\cB}$ in which $\rho_\wek{A}$ is represented can be thought of as being chosen adversarially.

In Section~\ref{scn:iso} we provide a simple sufficient condition under which the generation of entanglement is always possible with an appropriate choice of $\ket{\phi}$, regardless of the choice of adversarial local bases $\wek{\cB}$. We then observe that this condition holds for all pseudo-isotropic states as in Eq.~\eqref{eq:pseudoisotropic}, with $\psi$ entangled and $p>0$.
In Sections~\ref{scn:CQ} and \ref{scn:QQ}, we  provide examples of Classical-Quantum (CQ) and Quantum-Quantum (QQ) separable states, respectively, for which entanglement between the systems cannot be generated in this fashion, i.e., there exists a choice of local unitaries that prevents the generation of entanglement in the systems via the swapping of system-ancilla entanglement, \emph{even if} there is necessarily entanglement between systems and ancillae after the activation protocol is run.

\subsection{Sufficient condition for entanglement swapping}
\label{scn:iso}


We focus again on the bipartite case $A_1=A$, $A_2=B$. We have the following simple condition which ensures the swapping of entanglement is possible.
\begin{theorem}
\label{thm:suffswap}
If for any choice of local basis $\wek{\cB}$, there exists a non-zero off-diagonal element of an off-diagonal block of $\rho_{AB}^{\wek{\cB}}$, i.e., if for all $\wek{\cB}=\cB_A\cB_B$ there exists a choice of $i\neq j$ and $k\neq l$ such that $\bra{\cB_A(i)\cB_B(k)}\rho_{AB}\ket{\cB_A(j)\cB_B(l)}\neq0$, then it is possible to swap entanglement back into the input systems (regardless of the choice of $\wek{\cB}$), i.e. there exists a $\ket{\phi}$ such that $\tilde{\rho}_{\wek{A}}$ in Eq.~(\ref{eqn:state}) is entangled.
\end{theorem}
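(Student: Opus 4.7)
The plan is to apply the Peres--Horodecki (PPT) criterion to the post-swapping state $\tilde{\rho}_{\wek{A}}$ of Eq.~(\ref{eqn:state}): given the adversary's basis choice $\wek{\cB}$, I would exhibit an explicit projection vector $\ket{\phi}$ such that $\tilde{\rho}_{AB}$ has a non-positive partial transpose. The key observation is that the hypothesis hands us a single nonzero matrix element $c := \rho^{\wek{\cB}}_{(ik)(jl)}$ with $i\neq j$ and $k\neq l$, and the natural strategy is to tailor $\ket{\phi}$ so that this element becomes, effectively, the only feature of the resulting state.

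Concretely, I would set $\ket{\phi} = (\ket{ik} + \ket{jl})/\sqrt{2}$ on the ancilla. Substituting into Eq.~(\ref{eqn:state}), only the four terms with $\wek{i},\wek{j}\in\{(i,k),(j,l)\}$ survive, giving
$$\tilde{\rho}_{AB} = \frac{1}{2}\left[\rho^{\wek{\cB}}_{(ik)(ik)}\ketbra{ik}{ik} + \rho^{\wek{\cB}}_{(jl)(jl)}\ketbra{jl}{jl} + c\,\ketbra{ik}{jl} + c^{*}\ketbra{jl}{ik}\right].$$
Taking the partial transpose with respect to $B$ in the computational basis, the two diagonal terms stay put, while the off-diagonal terms map as $\ketbra{ik}{jl}\mapsto\ketbra{il}{jk}$ and its conjugate. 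Because $i\neq j$ and $k\neq l$, the four computational basis vectors $\ket{ik}$, $\ket{jl}$, $\ket{il}$, $\ket{jk}$ are mutually orthogonal; hence in the $\{\ket{il},\ket{jk}\}$ principal block the diagonal entries of $(\tilde{\rho}_{AB})^{T_{B}}$ are both zero while the off-diagonal entry is $c/2\neq 0$.

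This $2\times 2$ block therefore has eigenvalues $\pm|c|/2$, and by Cauchy interlacing $(\tilde{\rho}_{AB})^{T_{B}}$ itself has a strictly negative eigenvalue; equivalently, the vector $(\ket{il} - e^{i\arg c}\ket{jk})/\sqrt{2}$ witnesses a negative expectation value. The PPT criterion then yields that $\tilde{\rho}_{AB}$ is entangled. The main obstacle, and the reason for the specific choice of $\ket{\phi}$, is ensuring that the diagonal entries of the relevant $2\times 2$ sub-block of the partial transpose truly vanish: keeping $\ket{\phi}$ supported only on $\ket{ik}$ and $\ket{jl}$ kills any contribution from $\ketbra{il}{il}$ and $\ketbra{jk}{jk}$, which would otherwise compete with the off-diagonal coupling $c$. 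The index-distinctness hypothesis $i\neq j$, $k\neq l$ is exactly what makes $\ket{il}$ and $\ket{jk}$ distinct from (and orthogonal to) $\ket{ik}$ and $\ket{jl}$, so the isolation of a negative eigenvalue goes through cleanly and entanglement is swapped into the input systems regardless of the adversarial choice of $\wek{\cB}$.
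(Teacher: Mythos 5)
Your proposal is correct and follows essentially the same route as the paper: the same choice $\ket{\phi}=(\ket{ik}+\ket{jl})/\sqrt{2}$, isolating the four ``corners of a square'' and concluding via the Peres--Horodecki criterion that $\tilde{\rho}_{\wek{A}}$ is NPT. You merely make explicit the step the paper leaves implicit, namely that the $\{\ket{il},\ket{jk}\}$ principal block of the partial transpose has zero diagonal and off-diagonal entry $c/2\neq 0$, forcing a negative eigenvalue.
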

\begin{proof}
   The strategy of the proof is to choose $\ket{\phi}$ so that the result of the Hadamard product in Eq.~(\ref{eqn:state}) is non-positive under partial transposition (NPT).\cite{peresPT,horodeckiPT} Fix any choice of local basis $\wek{\cB}$. By assumption, we know there exist indices $i\neq j$ and $k\neq l$ such that $\bra{\cB_A(i)\cB_B(k)}\rho_{AB}\ket{\cB_A(j)\cB_B(l)}\neq0$. In order to ensure that $\tilde{\rho}_A$ is NPT, we thus choose $\ket{\phi}$ to single out these non-zero off-diagonal terms by setting
       \begin{equation}
        \ket{\phi} = \frac{1}{\sqrt{2}}(\ket{ik} + \ket{jl}).
    \end{equation}
    With this choice of $\ket{\phi}$, $\tilde{\rho}_A$ becomes a Hermitian matrix with only four non-zero entries, two of which lie on the diagonal at positions $\ketbra{i}{i}\otimes\ketbra{k}{k}$ and $\ketbra{j}{j}\otimes\ketbra{l}{l}$, and two of which lie at off-diagonal positions of off-diagonal blocks at $\ketbra{i}{j}\otimes\ketbra{k}{l}$ and $\ketbra{j}{i}\otimes\ketbra{l}{k}$ (i.e. the four entries form the four corners of a square). It follows that the partial transpose of $\tilde{\rho}_A$ is not positive.
\end{proof}

\begin{corollary}\label{thm:iso}
    For any
    \beq
\label{eq:pseudoisotropic}
\rho(\psi,p)_{AB}:=(1-p)\frac{\openone_{AB}}{D}+p\proj{\psi}_{AB},
\eeq
with $\openone_{AB}/D$ the maximally mixed state for $AB$ and $D$ the dimension of $AB$, if $\ket{\psi}$ is entangled and $p>0$, then there exists a choice of $\ket{\phi}$ such that $\tilde{\rho}_{\wek{A}}= \Tr_{\wek{A'}}(\rho_{\wek{A}:\wek{A'}}\proj{\phi}_\wek{A'})$ is entangled.
\end{corollary}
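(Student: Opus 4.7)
The plan is to apply Theorem~\ref{thm:suffswap} to $\rho(\psi,p)_{AB}$, so that it suffices to show: for every product basis $\wek{\cB}=\cB_A\cB_B$ there exist indices $i\neq j$ and $k\neq l$ with $\bra{\cB_A(i)\cB_B(k)}\rho(\psi,p)_{AB}\ket{\cB_A(j)\cB_B(l)}\neq 0$. The first thing I would notice is that the maximally-mixed component $(1-p)\openone_{AB}/D$ is diagonal in every product basis, so it only contributes to matrix elements with $(i,k)=(j,l)$. Writing $c_{ik}:=\braket{\cB_A(i)\cB_B(k)}{\psi}$, the off-diagonal matrix element above therefore reduces to $p\,c_{ik}\overline{c_{jl}}$, and since $p>0$ the task further reduces to finding $i\neq j$ and $k\neq l$ with both $c_{ik}$ and $c_{jl}$ non-zero.

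Next I would translate the entanglement of $\ket{\psi}$ into a structural statement about the coefficient matrix $C=(c_{ik})$. Because $\ket{\psi}$ is entangled, its Schmidt rank (equivalently, the matrix rank of $C$) is at least two in every product basis; in particular, the support of $C$ cannot be contained in a single row or a single column of the $d_A\times d_B$ array, since either would force $C$ to have rank at most one and $\ket{\psi}$ to factor in $\wek{\cB}$.

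Finally, the required index pair is extracted by a short combinatorial argument. Since the support of $C$ meets at least two rows, pick non-zero entries $c_{i_1 k_1}$ and $c_{i_2 k_2}$ with $i_1\neq i_2$; if already $k_1\neq k_2$, the pair $\{(i_1,k_1),(i_2,k_2)\}$ works. Otherwise $k_1=k_2$, and because the support also meets at least two columns there is some non-zero entry $c_{i_3 k_3}$ with $k_3\neq k_1$; whichever of $i_1,i_2$ differs from $i_3$ produces, together with $(i_3,k_3)$, a pair of indices differing in both coordinates and with both entries non-zero. I expect this elementary bookkeeping to be the only moving part of the argument that requires any care; once it is in place, Theorem~\ref{thm:suffswap} directly supplies the sought $\ket{\phi}$ making $\tilde{\rho}_{\wek{A}}=\Tr_{\wek{A'}}(\tilde{\rho}_{\wek{A}:\wek{A'}}\proj{\phi}_{\wek{A'}})$ entangled.
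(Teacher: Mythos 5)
Your proposal is correct and takes essentially the same route as the paper's proof: reduce to Theorem~\ref{thm:suffswap} by observing that the maximally mixed component is diagonal in every product basis, so the condition only needs to be verified for the pure entangled component $\ket{\psi}$. The only difference is that you spell out explicitly (via the rank of the coefficient matrix and a short combinatorial step) what the paper compresses into the remark that the condition ``easily follows'' from the Schmidt decomposition having at least two non-zero coefficients.
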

\begin{proof}
Since the maximally mixed component of (18) is diagonal with respect to any choice of local bases, it suffices to argue that $|\psi\rangle$ satisfies the condition of Theorem 1. This easily follows from the fact $|\psi\rangle$ is entangled, and thus has, up to local unitaries, a Schmidt decomposition $\sum_{k=0}^{d_A-1} \sqrt{\lambda_k} \ket{k}\ket{k}$, with $\lambda_0\geq\lambda_1>0$.
\end{proof}

Corollary~\ref{thm:iso} shows that for any value of $p>0$, entanglement can be transferred to the original system for the pseudo-isotropic state $\rho(p,\psi)$ of Eq.~\eqref{eq:pseudoisotropic}, even for values of $p$ which correspond to \emph{separable} states (note that for $p$ small enough, the state $\rho(p,\psi)$ is separable due to the existence of a separable ball around the maximally mixed state\cite{zyczkowski1998sepball,gurvits2002sepball}). We remark that for all $p>0$ and entangled $\ket{\psi}$, $\rho(p,\psi)$ is non-classical,\cite{groismanquantumness,activation} and so here the non-classicality of the starting state allows us to create entanglement in the original systems $AB$ by applying the activation protocol followed by our entanglement swapping procedure.

\subsection{Classical-quantum separable states}\label{scn:CQ}

In Section~\ref{scn:iso}, we demonstrated that for certain non-classically correlated states, entanglement can be mapped back into the original system after the activation protocol is run. Can this be achieved with \emph{any} type of non-classically correlated input? We now show that the answer is no --- there exist separable non-classical states such that, while entanglement is always generated in the activation protocol between systems and ancillae independently of the local unitaries $U_A$ and $U_B$, a proper adversarial choice of local unitaries $U_A$ and $U_B$ can nevertheless prevent entanglement from being mapped back to the system.

Consider the separable non-classical CQ state of Eq.~\eqref{eq:CQexample}.
%
%
By Eq.~(\ref{eqn:state}), note that when the adversarial local unitaries are chosen as $U_A=U_B=\openone$, we have
\begin{equation}
    \tilde{\rho}_\wek{A}= \rho_{\wek{A}} \circ \ketbra{\phi}{\phi}.
\end{equation}
Since $\rho$ is block diagonal, it hence follows that $\tilde{\rho}_\wek{A}$ is block diagonal, since the Hadamard product cannot change this block diagonal structure regardless of the choice of $\ket{\phi}$. We conclude that there exists a choice of local bases (i.e the computational basis) with respect to which $\tilde{\rho}_\wek{A}$ is always separable for all $\ket{\phi}$, i.e., it is not possible to project the (necessarily present) system-ancilae entanglement generated in the activation protocol back onto the system. In fact, this proof approach holds for \emph{any} CQ (or QC) state that is not strictly classically correlated, implying that for such states, there is a choice of local unitaries for which, even if entanglement is created between system and ancilla in the activation protocol, such entanglement can not be swapped back into the input system.

\subsection{Quantum-quantum separable states}\label{scn:QQ}
Based on the results in Section~\ref{scn:CQ}, one might hope that entanglement generation in separable starting systems is possible if $\rho$ is not CQ nor QC (i.e. $\rho$ is what we might call \emph{QQ separable}). We provide a counterexample to this conjecture here --- namely, we prove that there exist QQ separable states for which an adversarial choice of local bases in the activation protocol prevents the swapping of ancilla-system entanglement back into the input systems.

To do so, consider the separable QQ operator:
\[
    \rho_{AB} = \frac{1}{2}\ketbra{0}{0}\otimes\ketbra{+}{+} + \frac{1}{2}\ketbra{+}{+}\otimes\ketbra{0}{0}
    =\frac{1}{4}\left(
                                                   \begin{array}{cccc}
                                                     2 & 1 & 1 & 0 \\
                                                     1 & 1 & 0 & 0 \\
                                                     1 & 0 & 1 & 0 \\
                                                     0 & 0 & 0 & 0 \\
                                                   \end{array}
                                                 \right).
\]
\noindent To prove our claim, as in Section~\ref{scn:CQ}, we choose local adversarial unitaries $U_A=U_B=\openone$ and show that $\tilde{\rho}_{AB}= \rho_{AB} \circ \ketbra{\phi}{\phi}$ is separable for any choice of $\ket{\phi}$. The latter is shown by first deriving a condition under which the eigenvalues of Hermitian operators with a structure similar to $\rho$ remain invariant under partial transposition. We then show that $\rho$ fulfills this condition for any choice of $\ket{\phi}$, implying $\rho$ always remains separable, since the partial transpose is a necessary and sufficient condition for separability of two-qubit states.\cite{horodeckiPT}

\begin{lemma}\label{l:invarSpec}
    Given any Hermitian operator $X$ acting on $\complex^2\otimes\complex^2$ with off-diagonal blocks which are diagonal, i.e.,
\begin{equation}
    X = \left(
      \begin{array}{cccc}
        a_{11} & a_{12} & a_{13} & 0 \\
        a_{12}^* & a_{22} & 0 & a_{24} \\
        a_{13}^* & 0 & a_{33} & a_{34} \\
        0 & a_{24}^* & a_{34}^* & a_{44} \\
      \end{array}
    \right),
\end{equation}
    if either $a_{12}a_{34}^\ast\in\reals$ or $a_{13}a_{24}^\ast\in\reals$, then the spectrum of $A$ is invariant under partial transposition.
\end{lemma}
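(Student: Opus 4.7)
My plan is to show that the hypothesis yields an explicit unitary similarity between $X$ and one of its partial transposes. For a Hermitian $X$ one has $X^{T_A}=(X^{T_B})^\ast$, and both $X^{T_A}$ and $X^{T_B}$ are Hermitian, so their (real) spectra coincide. Hence it suffices to unitarily conjugate $X$ into $X^{T_B}$ in the first case $a_{12}a_{34}^\ast\in\reals$, and into $X^{T_A}$ in the second case $a_{13}a_{24}^\ast\in\reals$.

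First I would read off how each partial transpose acts on the structured $X$. Because the off-diagonal $2\times 2$ blocks of $X$ are already diagonal by hypothesis, they are unchanged by the block-wise transpose defining $T_B$; so $X^{T_B}$ differs from $X$ only by $a_{12}\to a_{12}^\ast$ and $a_{34}\to a_{34}^\ast$. Likewise $T_A$ merely swaps the off-diagonal blocks (each of which equals its own transpose), producing $X^{T_A}$ with $a_{13}\to a_{13}^\ast$ and $a_{24}\to a_{24}^\ast$, all other entries fixed.

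The key step is then to conjugate by a local diagonal unitary. Take $U=\openone_A\otimes\mathrm{diag}(1,e^{i\varphi})$, i.e.\ $U=\mathrm{diag}(1,e^{i\varphi},1,e^{i\varphi})$ as a $4\times 4$ matrix; conjugation sends $X_{ij}\mapsto U_{ii}U_{jj}^\ast X_{ij}$. A short computation shows $a_{13}$ and $a_{24}$ are untouched, while $a_{12}$ and $a_{34}$ each pick up a factor $e^{-i\varphi}$. A single phase $\varphi$ can simultaneously map $a_{12}\mapsto a_{12}^\ast$ and $a_{34}\mapsto a_{34}^\ast$ iff $a_{12}^\ast/a_{12}=a_{34}^\ast/a_{34}$, which is exactly the condition $a_{12}a_{34}^\ast\in\reals$ (with a vanishing entry making its own equation trivial, and choosing, e.g., $\varphi=2\arg a_{12}$ works in the generic case). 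An entirely symmetric argument with $U=\mathrm{diag}(1,e^{i\psi})\otimes\openone_B$ handles the second hypothesis $a_{13}a_{24}^\ast\in\reals$ by unitarily conjugating $X$ into $X^{T_A}$.

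The only subtlety I anticipate is verifying that the vanishing of $X_{14}$ and $X_{23}$ (forced by the diagonal-off-diagonal-blocks hypothesis) is exactly what keeps the conjugation-induced constraints down to the two equations treated above, so that the lone free phase is matched precisely to the realness hypothesis. Beyond this bookkeeping, I foresee no genuine obstacle.
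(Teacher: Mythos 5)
Your proof is correct, but it takes a genuinely different route from the paper's. The paper argues by brute force on the spectrum itself: it computes the difference of characteristic polynomials, $p_X(\lambda)-p_{X^\Gamma}(\lambda)=4\operatorname{Im}(a_{13}^\ast a_{24})\operatorname{Im}(a_{12}a_{34}^\ast)$, which vanishes under either hypothesis, so $X$ and $X^\Gamma$ are isospectral. You instead exhibit an explicit similarity: because the $(1,4)$ and $(2,3)$ entries vanish, $X^{T_B}$ differs from $X$ only by $a_{12}\mapsto a_{12}^\ast$, $a_{34}\mapsto a_{34}^\ast$, while conjugation by $\openone\otimes\mathrm{diag}(1,e^{i\varphi})$ multiplies exactly those two entries by $e^{-i\varphi}$ and fixes everything else; a single phase can realize both conjugations precisely when $a_{12}a_{34}^\ast\in\reals$ (e.g. $\varphi=2\arg a_{12}$ in the generic case, with a vanishing entry imposing no constraint), and the symmetric argument with $\mathrm{diag}(1,e^{i\psi})\otimes\openone$ handles $a_{13}a_{24}^\ast\in\reals$ via $T_A$; your observation that $X^{T_A}=(X^{T_B})^\ast$ for Hermitian $X$ correctly transfers the conclusion to either partial transpose. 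Your checks of how $T_A$, $T_B$ and the diagonal conjugation act on each entry are all accurate, so there is no gap. What each approach buys: yours gives a stronger structural conclusion — the partial transpose is \emph{locally unitarily equivalent} to $X$ by a one-sided phase gate, from which isospectrality (and, in the paper's application, positivity of $\tilde\rho^\Gamma$) follows immediately without any determinant bookkeeping; the paper's computation instead yields an exact formula for the discrepancy of the characteristic polynomials, which makes transparent that the hypothesis is not just sufficient but essentially the right condition, a fact the paper leans on when it notes that its final example must violate the lemma's conditions.
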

\begin{proof}
    Let $p_X(\lambda)$ and $p_{X^\Gamma}(\lambda)$ denote the characteristic polynomials of $X$ and $X^\Gamma$, the partial transpose of $X$, respectively. Then $p_X(\lambda)-p_{X^\Gamma}(\lambda)=2\operatorname{Re}(a_{12}a_{34}^\ast a_{13} a_{24}^\ast-a_{12}^\ast a_{34} a_{13} a_{24}^\ast)=4 \operatorname{Im}( a_{13}^\ast a_{24} )\operatorname{Im}( a_{12} a_{34}^\ast)$, where $\operatorname{Re}(x)$ ($\operatorname{Im}(x))$ denotes the real (imaginary) part of $x$. The claim follows for $a_{12}a_{34}^\ast\in\reals$. An analogous calculation yields the $a_{13}a_{24}^\ast\in\reals$ case.
\end{proof}

With Lemma~\ref{l:invarSpec} in hand, it is easy to see that $\tilde{\rho}_{AB}$ has a positive partial transpose (and is hence separable) for all $\ket{\phi}$ --- specifically, we observe that $\rho$ satisfies the conditions of Lemma~\ref{l:invarSpec} since $a_{12}a_{34}^\ast=(1/4)(0)=0$, and this in particular holds even after taking the Hadamard product with any $\ketbra{\phi}{\phi}$. Since $\rho$ is positive semidefinite, it thus follows from Lemma~\ref{l:invarSpec} that $\tilde{\rho}_A$ must also be positive semidefinite under partial transposition and hence separable. Thus, there exist QQ separable states for which system-ancilla entanglement cannot be mapped back to the system.



Theorem \ref{thm:suffswap} tells us that if a two-qubit state $\rho$ has off-diagonal terms on its off-diagonal blocks for any choice of local bases, then entanglement can be created among the systems via swapping. On the other hand, if $\rho$ is restricted to having off-diagonal blocks which are diagonal, as was seen with the CQ and QQ counterexamples considered in Sections~\ref{scn:CQ} and~\ref{scn:QQ}, then there are choices of local initial rotations such that entanglement generation among the systems is not necessarily possible (actually, in the CQ case, entanglement generation is not possible for \emph{any} choice of local initial rotations).

One could ask whether this ``diagonal off-diagonal'' block structure is sufficient to rule out the possibility of entanglement generation. The answer is negative. Consider the following (un-normalized) positive semidefinite operator which has diagonal off-diagonal blocks:
\begin{equation}
    \rho=\left(
      \begin{array}{cccc}
        \frac{3}{2} & i & 1 & 0 \\
        -i & \frac{3}{2} & 0 & i \\
        1 & 0 & \frac{3}{2} & 1 \\
        0 & -i & 1 & \frac{3}{2} \\
      \end{array}
    \right).
\end{equation}
It turns out that the partial transposition of $\rho$ has a negative eigenvalue (observe that $\rho$ thus also necessarily violates the conditions of Lemma~\ref{l:invarSpec}). Hence, despite the fact that $\rho$ has off-diagonal blocks which are diagonal, it is nevertheless entangled, implying entanglement transfer to the system is possible for any choice of local bases: indeed, the Hadamard product can be chosen to be trivial, so that the projection simply gives back (a locally rotated and unnormalized) $\rho_\wek{A}$.

\section{Conclusions}
\label{sec:conclusions}

In this paper, stimulated by the findings of Ref.~\refcite{activation}, we have considered two issues: the quantification and bounding of non-classicality for classical-quantum states, and the interpretation of the activation protocol of Ref.~\refcite{activation} as a way to entangle input systems through interaction with ancillae. With respect to these two issues, we believe the most interesting open questions are the following.

We have found bounds on the non-classicality (as measured by the relative entropy of entanglement) of classical-quantum states, and we have characterized the maximally non-classical two-qubit classical-quantum states. It would be interesting to find bounds on the non-classicality of general separable states: from Ref.~\refcite{activation} we know that, e.g., a separable state of two qubits can never be as non-classical as a maximally entangled pure state, but at present we do not know how large the gap between the two is. It would also be nice to characterize the maximally non-classical classical-quantum states of systems of higher dimension than qubits.

With respect to the swapping of the post-activation ancilla-system entanglement onto the original systems, we have both necessary conditions and sufficient conditions for the swapping to be possible in an adversarial scenario, but we lack conditions which are \emph{simultaneously} necessary and sufficient. In finding such conditions, we suspect it would be beneficial to study the problem which arises in our swapping scheme: when is it possible to make a state entangled by rescaling rows and columns as in Eq.~\eqref{eqn:state}?

\section*{Acknowledgments}

We acknowledge  support by NSERC, QuantumWorks, CIFAR, Ontario Centres of Excellence, the Spanish MICINN through the Ram\'on y Cajal program, contract FIS2008-01236/FIS, and project QOIT (CONSOLIDER2006-00019), and by the Generalitat de Catalunya through
CIRIT 2009SGR-0985.

\bibliographystyle{ws-ijqi}
\bibliography{quantumness-ijqi}

\begin{thebibliography}{10}

\bibitem{HHHH09}
R.~Horodecki, P.~Horodecki, M.~Horodecki and K.~Horodecki, {\em Reviews of
  Modern Physics} {\bf 81}  (2009) 865.

\bibitem{pianietal2008nolocalbrodcast}
M.~Piani, P.~Horodecki and R.~Horodecki, {\em Phys. Rev. Lett.} {\bf 100} (Mar
  2008) p. 090502.

\bibitem{DHLST04}
D.~P. DiVincenzo, M.~Horodecki, D.~W. Leung, J.~A. Smolin and B.~M. Terhal,
  {\em Phys. Rev. Lett.} {\bf 92}  (2004) p. 067902.

\bibitem{DG09}
A.~Datta and S.~Gharibian, {\em Phys. Rev. A} {\bf 92}  (2009) p. 067902.

\bibitem{PhysRevLett.81.5672}
E.~Knill and R.~Laflamme, {\em Phys. Rev. Lett.} {\bf 81} (Dec 1998) 5672.

\bibitem{PhysRevLett.100.050502}
A.~Datta, A.~Shaji and C.~M. Caves, {\em Phys. Rev. Lett.} {\bf 100} (Feb 2008)
  p. 050502.

\bibitem{PhysRevLett.88.017901}
H.~Ollivier and W.~H. Zurek, {\em Phys. Rev. Lett.} {\bf 88} (Dec 2001) p.
  017901.

\bibitem{HV}
L.~Henderson and V.~Vedral, {\em J. Phys. A -- Math. Gen.} {\bf 34} (Aug 2001)
  p. 6899.

\bibitem{PhysRevA.71.062307}
M.~Horodecki, P.~Horodecki, R.~Horodecki, J.~Oppenheim, A.~Sen(De), U.~Sen and
  B.~Synak-Radtke, {\em Phys. Rev. A} {\bf 71} (Jun 2005) p. 062307.

\bibitem{PhysRevA.72.032317}
B.~Groisman, S.~Popescu and A.~Winter, {\em Phys. Rev. A} {\bf 72} (Sep 2005)
  p. 032317.

\bibitem{PhysRevLett.104.080501}
K.~Modi, T.~Paterek, W.~Son, V.~Vedral and M.~Williamson, {\em Phys. Rev.
  Lett.} {\bf 104} (Feb 2010) p. 080501.

\bibitem{groismanquantumness}
B.~Groisman, D.~Kenigsberg and T.~Mor, Quantumness versus classicality of
  quantum states, arXiv:quant-ph/0703103,  (2007).

\bibitem{PhysRevA.77.052101}
A.~SaiToh, R.~Rahimi and M.~Nakahara, {\em Phys. Rev. A} {\bf 77} (May 2008) p.
  052101.

\bibitem{Luo2008}
S.~Luo, {\em Phys. Rev. A} {\bf 77} (Jan 2008) p. 022301.

\bibitem{Bravyi2003}
S.~Bravyi, {\em Phys. Rev. A} {\bf 67} (Jan 2003) p. 012313.

\bibitem{pianietal2009broadcastcopies}
M.~Piani, M.~Christandl, C.~E. Mora and P.~Horodecki, {\em Phys. Rev. Lett.}
  {\bf 102} (Jun 2009) p. 250503.

\bibitem{ferraro}
A.~Ferraro, L.~Aolita, D.~Cavalcanti, F.~M. Cucchietti and A.~Ac\'\i{}n, {\em
  Phys. Rev. A} {\bf 81}  (2010) p. 052318.

\bibitem{ADA}
G.~Adesso and A.~Datta, {\em Phys. Rev. Lett.} {\bf 105} (Jul 2010) p. 030501.

\bibitem{PhysRevA.82.052342}
R.~Rossignoli, N.~Canosa and L.~Ciliberti, {\em Phys. Rev. A} {\bf 82} (Nov
  2010) p. 052342.

\bibitem{streltsov2011}
A.~Streltsov, H.~Kampermann and D.~Bru\ss{}, {\em Phys. Rev. Lett.} {\bf 106}
  (Apr 2011) p. 160401.

\bibitem{activation}
M.~Piani, S.~Gharibian, G.~Adesso, J.~Calsamiglia, P.~Horodecki and A.~Winter,
  All non-classical correlations can be activated into distillable
  entanglement, arXiv:1103.4032, to appear in Phys. Rev. Lett.,  (2011).

\bibitem{reviewplenio}
M.~B. Plenio and S.~Virmani, {\em Quant. Inf. Comp.} {\bf 7} (Jan 2007) p.~1.

\bibitem{swapping}
M.~\ifmmode~\dot{Z}\else \.{Z}\fi{}ukowski, A.~Zeilinger, M.~A. Horne and A.~K.
  Ekert, {\em Phys. Rev. Lett.} {\bf 71} (Dec 1993) 4287.

\bibitem{R01}
E.~M. Rains, {\em IEEE Trans. Inf. Theory} {\bf 47} (Nov 2001) p. 2921.

\bibitem{PhysRevLett.78.2275}
V.~Vedral, M.~B. Plenio, M.~A. Rippin and P.~L. Knight, {\em Phys. Rev. Lett.}
  {\bf 78} (Mar 1997) 2275.

\bibitem{PhysRevA.57.1619}
V.~Vedral and M.~B. Plenio, {\em Phys. Rev. A} {\bf 57} (Mar 1998) 1619.

\bibitem{Hiroshima2004}
T.~Hiroshima and M.~Hayashi, {\em Phys. Rev. A} {\bf 70} (Jan 2004) p. 030302.

\bibitem{Maassen1988}
H.~Maassen and J.~B.~M. Uffink, {\em Phys. Rev. Lett.} {\bf 60} (Mar 1988)
  1103.

\bibitem{peresPT}
A.~Peres, {\em Phys. Rev. Lett.} {\bf 77} (Aug 1996) 1413.

\bibitem{horodeckiPT}
M.~Horodecki, P.~Horodecki and R.~Horodecki, {\em Phys. Lett. A} {\bf 223} (Nov
  1996) 1.

\bibitem{zyczkowski1998sepball}
K.~\ifmmode~\dot{Z}\else \.{Z}\fi{}yczkowski, P.~Horodecki, A.~Sanpera and
  M.~Lewenstein, {\em Phys. Rev. A} {\bf 58} (Aug 1998) 883.

\bibitem{gurvits2002sepball}
L.~Gurvits and H.~Barnum, {\em Phys. Rev. A} {\bf 66} (Dec 2002) p. 062311.

\end{thebibliography}

\end{document}